\title{De-biased Two-Sample U-Statistics With Application To Conditional Distribution Testing}
\begin{document}
\author[1]{Yuchen Chen}
\author[1]{Jing Lei}
\affil[1]{Carnegie Mellon University}

\maketitle

\begin{abstract}
In some high-dimensional and semiparametric inference problems involving two populations, the parameter of interest can be characterized by two-sample U-statistics involving some nuisance parameters.  In this work we first
extend the framework of one-step estimation with cross-fitting to two-sample U-statistics, showing that using an orthogonalized influence function can effectively remove the first order bias, resulting in asymptotically normal estimates of the parameter of interest.  As an example, we apply this method and theory to the problem of testing two-sample conditional distributions, also known as strong ignorability.  When combined with a conformal-based rank-sum test, we discover that the nuisance parameters can be divided into two categories, where in one category the nuisance estimation accuracy does not affect the testing validity, whereas in the other the nuisance estimation accuracy must satisfy the usual requirement for the test to be valid.  We believe these findings provide further insights into and enhance the conformal inference toolbox.
\end{abstract}

\section{Introduction}



Many statistical inference problems involve the estimation of functionals in the presence of nuisance parameters. In the basic setup, we observe independent data $Z_1,...,Z_n \sim P$ from some underlying distribution $P$ in some model class $\mathcal{P}$. We would like to estimate some functional of the form $\theta = E_{P} \varphi(Z,\gamma)$, where $\varphi$ is a known function but $\gamma$ is an unknown nuisance parameter. A natural way to estimate such a functional is through the following two-step procedure. First, we use part of the data to estimate the nuisance parameter $\hat{\gamma}_n$. Given such an estimate, we can take expectation over the empirical distribution of our data to get an estimate
\begin{equation*}
    \hat{\theta}_n = \frac{1}{n} \sum_{i=1}^n \varphi(Z_i,\hat{\gamma}_n).
\end{equation*}

Such estimates are called plug-in estimates. When the nuisance parameter $\gamma$ is highly complex, such as when it is high/infinite-dimensional, the first stage will often involve using flexible or regularized machine learning methods such as LASSO, random forests, or neural nets. It is known that when such methods are used for estimating the nuisance parameter, the plug-in estimate will be biased and not achieve desired asymptotic properties leading to invalid inference \citep{chernozhukov_doubledebiased_2018}.

Double/debiased machine learning offers a general procedure for estimating functionals in the presence of complex nuisance parameters \citep{chernozhukov_doubledebiased_2018}. The idea is to orthogonalize our original functional by adding a carefully chosen mean-zero function $\phi(Z,\gamma,\alpha)$, where $\alpha$ is a possible additional nuisance parameter. That is, we now estimate
\begin{equation*}
    \theta = E_P [\varphi(Z,\gamma) + \phi(Z,\gamma,\alpha)].
\end{equation*}
With this orthogonalization, the bias is now second-order in terms of the estimation error of $\gamma$ and $\alpha$. In particular, when $\gamma$ and $\alpha$ are estimated at faster than $n^{-1/4}$ rates, we can still expect the estimate of $\theta$ to be root-$n$ consistent and asymptotically normal. These rates are achievable by flexible and regularized estimators and thus we can still achieve correct inference with highly complex nuisance parameters. 

Recently, \citet{escanciano_debiased_2023} extended the double/debiased machine learning method and theory to degree-$2$ one-sample U-statistics. In this case, we are interested in estimating a functional of the form $E[\varphi(X_i,X_j,\gamma)]$, where $\varphi$ is symmetric in $X_i$ and $X_j$ and $\gamma$ is a nuisance parameter. Similarly, the plug-in estimate
\begin{equation*}
    \hat{\theta}_n = \binom{n}{2}^{-1} \sum_{i < j} \varphi(X_i,X_j,\hat{\gamma}_n),
\end{equation*} 
may be highly biased and lead to invalid inference when the nuisance parameter $\gamma$ is estimated using regularized or flexible ML methods. \citet{escanciano_debiased_2023} develop a U-statistic influence function that allows the double/debiased machine learning framework to be applied to two sample U-statistics as well. Accordingly, by correcting with the U-statistic influence function, we can achieve root-$n$ consistent and asymptotically normal estimators under mild regularity conditions.

One setting where the double/debiased machine learning framework is missing is in two-sample problems. In this case, we observe two samples $X_1,...,X_m \sim P$ and $U_1,...,U_n \sim Q$ and we would like to estimate 
\begin{equation*}
    \U \varphi(X,U,\gamma) := E_{P,Q} \varphi(X,U,\gamma).
\end{equation*} 
Given a first stage estimate $\hat{\gamma}_{mn}$, we can form the plug-in estimate
\begin{equation*}
    \U_{m,n} \varphi(X,U,\hat{\gamma}_{mn}) := \frac{1}{mn}\sum_{i=1}^m \sum_{j=1}^n \varphi(X_i,U_j,\hat{\gamma}_{mn}).
\end{equation*}
When $\gamma$ is estimated using flexible nonparametric methods, we again run into the same issues as in the one-sample case. In this paper, we extend the construction of U-statistic influence functions for one-sample U-statistics to two-sample U-statistics. In particular, we show how to use these influence functions to correct for bias induced by flexible estimation of complex nuisance parameters resulting in asymptotic normal estimators under mild regularity conditions.

An important instance where two-sample U-statistics appear is in two-sample testing problems. We consider the following two-sample testing problem. Given two samples
\begin{equation*}
    \{(X_i,Y_i)\}_{i=1}^m \sim F \text{ and } \{(U_j,V_j)\}_{j=1}^m \sim G
\end{equation*}
on the same sample space,
we are interested in testing whether
\begin{equation}
    \label{hypothesis}
    H_0: F_{Y|X} = G_{V|U} \text{ versus } H_1: F_{Y|X} \neq G_{V|U},
\end{equation}
where $F_{Y|X}$ and $G_{V|U}$ are the conditional distributions of $Y$ and $V$ given $X$ and $U$ respectively. The null hypothesis \ref{hypothesis} is commonly referred to as the covariate shift assumption \citep{hu_two-sample_2023}, and is also closely related to the ``strong ignorability'' condition in causal inference \citep{rosenbaum_central_1983}.

Transfer learning is a domain where the covariate shift assumption plays a crucial role \citep{pan_survey_2010}. In the previous setting, let $X_i$ and $U_j$ denote covariates and $Y_i$ and $V_j$ denote the response. Suppose that the first sample, $\{(X_i,Y_i)\}_{i=1}^m$ is our training data. Using this training data, we can build a model $\hat{f}$ such that $\hat{f}(X_i)$ is a prediction of $Y_i$. Now suppose we have test data $\{U_j\}_{j=1}^n$ on the same sample space but drawn from a different distribution. Transfer learning investigates whether it is possible to use $\hat{f}$ built using observations $(X_i,Y_i)$ to predict $V_j$ from $U_j$. 

Covariate shift is one setting where transfer learning is possible. In the covariate shift setting, as long as we can understand the marginal density ratio between the two samples, transfer learning can be achieved. To safely apply these techniques, it is important to verify that the covariate shift assumption holds. Transfer learning is useful in practice as it may be difficult to obtain labeled test data for one sample, but such test data is plentiful in a similar sample. With a covariate shift test, we can use the available labeled data from both samples to check if the covariate shift assumption is satisfied. If so, we can safely use models trained on the sample with abundant labeled training data for prediction on the other sample using transfer learning techniques.

Using ideas from weighted conformal prediction, \citet{hu_two-sample_2023} introduce a hypothesis test for covariate shift. Suppose the second sample consists of a single point $(U,V)$. Then weighted conformal prediction says that
\begin{equation*}
    \frac{1}{m} \sum_{i=1}^m \gamma(X_i)I\left(s(X_i,Y_i) < s(U,V)\right)
\end{equation*}
is a valid p-value (disregarding ties), where $\gamma:=\frac{g_U}{f_X}$ is the $X,U$ marginal density ratio of $F$ and $G$ and $s$ is any non-degenerate conformity score function. To extend this p-value to incorporate multiple points, \citet{hu_two-sample_2023} aggregate these conformal p-values into a two-sample U-statistic

\begin{equation*}
    \frac{1}{mn} \sum_{i=1}^m \sum_{j-1}^n \gamma(X_i)I\left(s(X_i,Y_i) < s(U_j,V_j)\right).
\end{equation*}
which is used as the test statistic. Under the null and any non-degenerate choice of $s$, the expected value of this test statistic is $\frac{1}{2}$. Under the alternative and when $s$ is the conditional density ratio, the expected value is strictly less than $\frac{1}{2}$. Thus, we can use this test statistic to give a one-sided hypothesis test for covariate shift.

Remarkably, this covariate shift test makes minimal assumptions on the distributions $F,G$, a property inherited from conformal prediction, and has power guarantees against any alternative with a good choice of conformal score function $s$. However, the validity of this test statistic is dependent on the estimation of a complex (possibly high-dimensional or nonparametric) nuisance parameter $\gamma$. Thus, the test statistic fits into the two-sample debiased U-statistic framework. By correcting using the two-sample U-statistic influence function, we derive a novel test for covariate shift based on aggregating conformity scores.


\section{Debiased Two-Sample U-Statistics}



In this section, we define the condition of Neyman orthogonality and show how to augment the original estimating equation by influence functions to achieve Neyman orthogonality.

For the next two sections, we will use the following general notation that extends beyond the regression/classification setting. Suppose that we observe two samples $X_1,...,X_m \sim F_0$ and $U_1,...,U_n \sim G_0$. For a function $\varphi(X,U,\gamma)$, we are interested in estimating 
\begin{equation*}
\theta_0 = \U[\varphi(X,U,\gamma_0)]\,,
\end{equation*}
where $\U$ denotes the population U-statistic that takes expectation over the product distribution of $(X,U)$, and the term $\gamma_0=\gamma(F_0,G_0)$ is the true value of a nuisance parameter, the value of a bivariate functional $\gamma(\cdot,\cdot)$ evaluated at $(F_0,G_0)$.

When $\gamma_0$ is estimated nonparametrically using popular machine learning methods, the resulting estimate of $\theta_0$ is known to be biased unless the estimate of $\gamma_0$ achieves an unrealistically high level of accuracy. In this section, we define orthogonality which eliminates the first order bias introduced through estimating $\gamma$, and show how to augment the original moment function by influence functions to achieve orthogonality.

\subsection{Neyman Orthogonality}

Neyman Orthogonality refers to removing the first order bias of the effect of the nuisance $\gamma$.  To gain some intuition, assume for now $\gamma_0=\gamma(F_0,G_0)$ is a scalar, and $\hat\gamma_0=\gamma(G_n,F_n)$ is an estimate. Assume $\theta(\gamma)=\U[\varphi(X,U,\gamma)]$ is a smooth function from $\mathbb R$ to $\mathbb R$. The plug-in estimate
$\hat\theta_n$ targets the perturbed version $\theta(\hat\gamma_0)$, which has bias
$\theta(\hat\gamma_0)-\theta(\gamma_0)=\theta'(\gamma_0)(\hat\gamma_0-\gamma_0)+o((\hat\gamma_0-\gamma_0)^2)$.  If $\theta'(\gamma_0)\neq 0$, then the bias has the same order as $\hat\gamma_0-\gamma_0$, which is typically larger than $n^{-1/2}$ if $\hat\gamma_0$ is obtained nonparametrically. The idea of orthogonalization is to modify $\varphi$ to a different U-statistic kernel that has the same mean value but zero first order derivative with respect to the nuisance parameter $\gamma$.

To make this idea rigorous and concrete, we need to define functional derivatives in our scenario. To this end, we will resort to path-wise differentiation, i.e., differentiation along 1-dimensional paths in the functional space. This type of functional derivative is closely related to the von Mises expansion (functional Taylor expansion), first considered by  \citet{v_mises_asymptotic_1947} and is a tool widely used in semiparametric theory \citep{pfanzagl_contributions_1982}.

Given alternate distributions $H,K$, we define a path $(F_\epsilon,G_\epsilon)_{\epsilon\in[0,1]}$, where
\begin{itemize}
    \item $F_\epsilon = (1-\epsilon)F_0 + \epsilon H$
    \item $G_\epsilon = (1-\epsilon)G_0 + \epsilon K$.
\end{itemize}
This is a path that starts at the true distribution $(F_0,G_0)$ and ends at $(H,K)$.

\begin{definition}
The U-statistic kernel $\varphi$ is Neyman orthogonal if for any one-dimensional path $(F_\epsilon,G_\epsilon)$, we have
\begin{equation*}
    \frac{d}{d\epsilon} 
    \U[\varphi(X,U,\gamma_\epsilon)] \big|_{\epsilon = 0} = 0.
\end{equation*}
\end{definition}
Intuitively, Neyman orthogonal kernels are those whose first order derivative with respect to the nuisance parameter equals $0$ and hence the plug-in estimate has only higher order bias. Next we describe how to orthogonalize an arbitrary kernel.

\subsection{U-Statistic Influence Function}
Suppose that our U-statistic kernel $\varphi$ is not orthogonal.
 In this case, we would like to find a new kernel $\psi$ such that $\U \psi(X,U,\gamma_0) = \theta_0$ and is also orthogonal. In this section, we will see how to accomplish this by augmenting $\varphi$ with a U-statistic influence function. This is an extension of the influence functions used in \citet{chernozhukov_doubledebiased_2018} for one-sample problems and follows from the influence functions used in \citet{escanciano_debiased_2023} for one-sample U-statistics. 

The orthogonalization of $\varphi$ may involve some additional nuisance parameters, denoted by $\alpha_0=\alpha(F_0,G_0)$.
We say $\phi(X,U,\gamma,\alpha)$ is a U-Statistic influence function if for any path $(F_\epsilon,G_\epsilon)$ from $(F_0,G_0)$ to $(H,K)$ as defined earlier, $\phi$ satisfies the following properties
\begin{equation}
    \label{mean zero}
    \int\int \phi(x,u,\gamma_\epsilon,\alpha_\epsilon) dF_\epsilon(x) dG_\epsilon(u) = 0 \text{ for } \epsilon \in [0,\overline{\epsilon}]
\end{equation}
and
\begin{equation}
    \label{pathwise diff}
    \frac{d}{d\epsilon}\U[\varphi(X,U,\gamma_\epsilon)] \big|_{\epsilon = 0} = \int\int \phi(x,u,\gamma_0,\alpha_0) [dF_0(x)dK(u)+dH(x)dG_0(u)].
\end{equation}
Here $\alpha_\epsilon$ is defined in the same way as $\gamma_\epsilon$, and $\overline{\epsilon} >0$ is a constant small enough such that both $\gamma_\epsilon$ and $\alpha_\epsilon$ exists for $\epsilon \in [0,\overline{\epsilon}].$

If such an influence function exists, note that condition \ref{mean zero} with $\epsilon = 0$ shows that
\begin{equation*}
    \U_{F_0,G_0} \phi(X,U,\gamma_0,\alpha_0) = 0.
\end{equation*}
Then the debiased U-statistic kernel
\begin{equation}
    \label{debias moment}
    \psi(X,U,\gamma,\alpha) := \varphi(X,U,\gamma) + \phi(X,U,\gamma,\alpha) 
\end{equation}
satisfies $\U\psi(X,U,\gamma_0,\alpha_0) = \theta_0$. We claim this moment function is Neyman orthogonal, as given in the following theorem, which is proved in \Cref{app:proof_sec3}.

\begin{theorem}\label{thm:neyman_ortho}
The U-statistic kernel $\psi$ in \eqref{debias moment} with influence function $\phi$ satisfying \eqref{mean zero} and \eqref{pathwise diff} is Neyman orthogonal. That is, it satisfies
\begin{equation*}
    \frac{d}{d\epsilon}\U[\psi(X,U,\gamma_\epsilon,\alpha_\epsilon)]\big|_{\epsilon=0} = 0\,.
\end{equation*}
\end{theorem}

\subsection{Cross-Fitting}
In order to implement the debiased estimator, we need to first estimate the nuisance parameters $\gamma$ and $\alpha$. Estimating the nuisance parameters and evaluating the empirical estimate on the same data sample is a form of double-dipping which can lead to incorrect inference if not carefully handled. There have been two popular approaches to address this issue. The first is to use empirical process techniques, such as by imposing Donsker conditions on the estimation of the nuisance parameters. However, Donsker conditions are difficult to verify when we use black-box ML methods. Thus, we will not discuss these conditions any further. For additional references on this approach see \cite{bolthausen_lectures_2002} (Section 6).

Naturally, another way to remove the double-dipping is to sample split. That is use half the sample to estimate the nuisance parameters and the other half to estimate the parameter. Sample splitting works for any method of estimation for the nuisance parameters but has the downside of only making use of part of the data for estimation. To increase the efficiency of sample splitting, cross-fitting was introduced. The basic idea of cross-fitting is that we will swap the roles of the data used for nuisance and parameter estimation and then aggregate. The use of cross-fitting to avoid restrictive empirical process techniques has its roots in functional estimation and is widely used in modern semiparametric methods such as double/debiased machine learning \citep{bickel_estimating_1988,hasminskii_asymptotically_1986,pfanzagl_contributions_1982,schick_asymptotically_1986,chernozhukov_doubledebiased_2018,chernozhukov_locally_2022,escanciano_debiased_2023,kennedy_semiparametric_2023}. We need to adapt cross-fitting for two-sample settings. 


For any positive integer $m$, define the set $[m]:=\{1,...,m\}$. We first partition the $x,u$ pairs. Let $C_1,...,C_S$ be a partition of $[m]$ and $D_1,...,D_T$ be a partition of $[n]$. We will use $I_{st} := C_s \times D_t$ to index the pairs $(X_i,U_j)$ where $i \in C_s$ and $j \in D_t$. We let $M_s:= |C_s|$, the cardinality of $C_s$ and $N_t := |D_t|$.

For each $s \in [S]$ and $t \in [T]$, we have an estimate
\begin{equation}\label{eq:theta_st}
    \hat{\theta}_{st} = \frac{1}{M_s N_t} \sum_{i,j \in I_{st}} \psi(X_i,U_j,\hat{\gamma}_{-st},\hat{\alpha}_{-st}).
\end{equation}
We use $\hat{\gamma}_{-st}$ and $\hat{\alpha}_{-st}$ to denote estimates of $\gamma$ and $\alpha$ using observations $\{(X_i,U_j):i \not\in C_s \text{ and } j \not\in D_t\}$.

The estimates on the individual folds can be aggregated into a single estimate by
\begin{equation}\label{eq:theta_crossfitting}
    \hat{\theta} = \sum_{s \in [S]} \sum_{t \in [T]} \frac{M_s N_t}{mn} \hat{\theta}_{st}.
\end{equation}

Note that when all folds are of equal size, this is just the mean of the estimates across all folds. 

\subsection{Finding the nuisance influence function $\phi$}
The nuisance influence function $\phi$ can usually be found by expanding the differentiation in the left hand side of identity \eqref{pathwise diff} and matching the two sides. 

The left side of \eqref{pathwise diff} involves the choice of arbitrary distributions $H$ and $K$. Choosing $H$ and $K$ to be point-mass contaminants, i.e. $H=\delta(x)$ where $\delta(x)$ is the distribution that is $x$ with probability 1, can make computing the differentiation much easier. This is similar to the idea of Gateaux derivatives. Such computations will give a candidate influence function. To show it is an actual influence function, we must then check that \eqref{mean zero} and \eqref{pathwise diff} hold for general choices of $H$ and $K$. We will use this strategy to compute the influence function in the application to covariate shifts. Further examples of this strategy in the one-sample case can be found in \citet{kennedy_semiparametric_2023}. 

In general, there are other strategies for computing influence functions in the one sample case that could be applicable in the two sample case as well \citep{kennedy_semiparametric_2023,chernozhukov_doubledebiased_2018,chernozhukov_locally_2022,escanciano_debiased_2023}.


\section{Asymptotics of the de-biased estimate}



In this section, we show asymptotic normality of the debiased estimator $\hat\theta$ using the orthogonalized kernel with cross-fitting as given in \eqref{eq:theta_crossfitting}. We keep the same notation as in the previous section. To study the asymptotic behavior of our estimate, we will consider a sequence of models indexed by $(m,n)$, where the sample sizes $(m,n)$ grow to infinity. 

For a given $s,t$, we have the expansion
\begin{equation*}
    \begin{split}
        \hat{\theta}_{st} - \theta_0 &= \hat{\mathbb{U}}_{st} \hat{\psi} - \mathbb{U}\psi\\
        &= \mathbb{U}_{st}(\hat{\psi}_{st} - \psi) + (\mathbb{U}_{st} - \mathbb{U})\psi\\
        &= (\hat{\mathbb{U}}_{st}-\mathbb{U})\psi + (\hat{\mathbb{U}}_{st}-\mathbb{U})(\hat{\psi}_{st} - \psi) + \mathbb{U}(\hat{\psi}_{st} - \psi)\\
        &=: A_{st} + B_{st} + C_{st},
    \end{split}
\end{equation*}

where $\hat{\mathbb{U}}_{st}$ is the $U$-statistic operator over the empirical distribution on the observations indexed by $I_{st}$ and $\hat{\psi}_{st} = \psi(\cdot,\cdot,\hat{\gamma}_{-st},\hat{\alpha}_{-st}).$

Summing over all folds, it follows that
\begin{equation}
\label{decomposition}
    \begin{split}
        \hat{\theta} - \theta_0 = \sum_{s \in [S]} \sum_{t \in [T]}\frac{M_s N_t}{mn} A_{st} + \sum_{s \in [S]} \sum_{t \in [T]}\frac{M_s N_t}{mn} B_{st} + \sum_{s \in [S]} \sum_{t \in [T]}\frac{M_s N_t}{mn} C_{st}.
    \end{split}
\end{equation}

We analyze these sums separately. 

\subsection*{Term 1: $\sum_{s \in [S]} \sum_{t \in [T]}\frac{M_s N_t}{mn} A_{st}$}
The first term evaluates to 
\begin{equation*}
    \sum_{s \in [S]} \sum_{t \in [T]}\frac{M_s N_t}{mn} A_{st} = (\hat{\mathbb{U}} - \mathbb{U}) \psi,
\end{equation*}
where $\hat{\mathbb{U}}$ is the U-statistic operator over the empirical distribution of the entire sample. Note that $\sqrt{m+n} (\hat{\mathbb{U}} - \mathbb{U}) \psi$ will satisfy a U-statistic central limit theorem, provided the kernel $\psi$ is non-degenerate, and the sample sizes are not too imbalanced.  In particular, we make the following assumption on the sample sizes.
\begin{assumption}
\label{as: sample ratio}
    As $m,n \rightarrow \infty$, $ \lim_{m,n} \frac{m}{m+n} =\lambda\in(0,1)$.
\end{assumption}

Under the three-term decomposition \eqref{decomposition} and the asymptotic normality of the first term, we would like terms 2 and 3 be asymptotically negligible. Then $\sqrt{m+n}(\hat{\theta}-\theta_0)$ will be asymptotically equivalent to a term 1 which is asymptotically normal. 

\subsection*{Term 2: $\sum_{s \in [S]} \sum_{t \in [T]}\frac{M_s N_t}{mn} B_{st}$}
The second term is the empirical process term. If $\hat{\psi}_{st}$ was a fixed function, each $B_{st}$ would be a sample average of a mean-zero quantity with a variance ${\rm Var}(\hat\psi_{st}-\psi)M_sN_t/(mn)$ which is itself vanishing.  A technical challenge here is that the function $\hat\psi_{st}$ is random.  Without sample splitting, we would need empirical process theory to deal with the dependence between $\hat U_{st}$ and $\hat\psi_{st}$, such as assuming the random functions $\hat{\psi}_{st}$ are in a Donsker class \citep[Theorem 6.15]{bolthausen_lectures_2002}. 
However, Donsker conditions can be quite restrictive and hard to verify for many popular black-box style machine learning methods.

The other popular approach to handling this term, which we will adapt, uses sample splitting (specifically cross-fitting) \citep{chernozhukov_doubledebiased_2018,kennedy_semiparametric_2023,newey_cross-fitting_2018}. In our construction above, the randomness $\hat{\mathbb U}_{st}$ and $\hat\psi_{st}$ are from disjoint subsets of the data, hence allowing for a simple conditional argument. Through the use of cross-fitting, under mild consistency conditions for estimating the nuisance parameters, this term will be asymptotically negligible.  With the cross-fitting method, we only need the estimated orthognalized influence functions to be consistent in $L^2$ norm.
\begin{assumption}
\label{as: consistency}
    For every fold, indexed by $(s,t)$, we have
    \begin{equation*}
        \|\hat{\psi}_{st} - \psi\| = o_p(1)\,,
    \end{equation*}
    where $\|\cdot\|$ denotes the $L^2$-norm under the product distribution distribution $(X,U)\sim F\times G$.
\end{assumption}

\subsection*{Term 3: $\sum_{s \in [S]} \sum_{t \in [T]}\frac{M_s N_t}{mn} C_{st}$}

Term 3 is exactly the bias of our estimator. If the $U$-statistic is debiased, then this bias term will be second-order. In this case, this term will be asymptotically negligible after scaling as long as the product of the convergence rates of $\|\hat{\gamma} - \gamma\|$ and $\|\hat{\alpha} - \alpha\|$ is $\sqrt{m+n}$. For example, this occurs when both are faster than $o((m+n)^{-\frac{1}{4}})$. Such rates can be achieved by many flexible ML methods.  Although the orthogonality established in \Cref{thm:neyman_ortho} suggests that in general the bias of an orthogonalized $U$-statistic kernel will be of higher order.  To rigorously verify that it is asymptotically negligible so that the first term dominates is non-trivial, and needs to be done in a case-by-case manner.  Here we state a general condition on the third term required for asymptotic normality of $\hat\theta$. We will come back to verify it in our example of covariate shift testing in the next section.
\begin{assumption}
\label{as: bias}
    The term $\sum_{s \in [S]} \sum_{t \in [T]}\frac{M_s N_t}{mn} C_{st}$ is $o_p\left(\frac{1}{\sqrt{m+n}}\right)$.
\end{assumption}

\begin{lemma}
\label{lm: asymptotic}
   Under assumptions \ref{as: sample ratio}, \ref{as: consistency}, and \ref{as: bias}, it follows that 
\begin{equation*}
    \sqrt{m+n}(\hat{\theta} - \theta_0) = \sqrt{m+n}(\mathbb{U}_{mn} - \mathbb{U})\psi + o_p(1).
\end{equation*}
\end{lemma}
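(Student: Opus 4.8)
\emph{Proof proposal.} The plan is to treat the three terms of the decomposition \eqref{decomposition} one at a time, after multiplying through by $\sqrt{m+n}$. Term~1 requires no work: the computation already recorded above shows $\sum_{s}\sum_{t}\frac{M_sN_t}{mn}A_{st}=(\hat{\mathbb U}-\mathbb U)\psi$, and since $\hat{\mathbb U}$ is by definition the full-sample U-statistic operator $\mathbb U_{mn}$, this term is exactly the leading term $(\mathbb U_{mn}-\mathbb U)\psi$ of the claim. Term~3 is handled directly by Assumption~\ref{as: bias}, which states that $\sum_{s}\sum_{t}\frac{M_sN_t}{mn}C_{st}=o_p((m+n)^{-1/2})$. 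So the whole statement reduces to showing that the scaled Term~2 is $o_p(1)$; and since $S,T$ are fixed and the weights satisfy $M_sN_t/(mn)\le 1$, it is enough to prove $\sqrt{m+n}\,B_{st}=o_p(1)$ for each fixed fold $(s,t)$.

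For Term~2 I would exploit cross-fitting through a conditioning argument. Fix $(s,t)$, set $h_{st}:=\hat\psi_{st}-\psi$, and condition on $\mathcal F_{st}:=\sigma(I_{-st})$, the sigma-field of the data used to construct the nuisance estimates. Since $h_{st}$ is $\mathcal F_{st}$-measurable while $\hat{\mathbb U}_{st}$ is built from the independent subsample $I_{st}$, conditionally on $\mathcal F_{st}$ the quantity $h_{st}$ is a fixed function and $B_{st}=(\hat{\mathbb U}_{st}-\mathbb U)h_{st}$ is a centered two-sample U-statistic, so $\mathbb E[B_{st}\mid\mathcal F_{st}]=0$. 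The Hoeffding/H\'ajek variance decomposition then bounds its conditional variance by a sum of terms of order $1/M_s$, $1/N_t$, $1/(M_sN_t)$, each with a coefficient at most a constant times ${\rm Var}_{F\times G}(h_{st})\le\|h_{st}\|^2$, because the projections of $h_{st}$ onto the $X$- and $U$-marginals and its completely degenerate part each have variance no larger than that of $h_{st}$. Hence
\[
  \mathbb E\big[(\sqrt{m+n}\,B_{st})^2\,\big|\,\mathcal F_{st}\big]\;\le\;C(m+n)\left(\frac{1}{M_s}+\frac{1}{N_t}\right)\|h_{st}\|^2 .
\]
Under Assumption~\ref{as: sample ratio}, $M_s\asymp m\asymp m+n$ and $N_t\asymp n\asymp m+n$, so the prefactor $C(m+n)(1/M_s+1/N_t)$ stays bounded, while $\|h_{st}\|^2=o_p(1)$ by Assumption~\ref{as: consistency}. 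Thus the conditional second moment of $\sqrt{m+n}\,B_{st}$ is $o_p(1)$, and the routine upgrade from conditional to unconditional convergence---conditional Markov inequality followed by dominated convergence applied to the $[0,1]$-valued random variable $\min\{1,\epsilon^{-2}\mathbb E[(\sqrt{m+n}B_{st})^2\mid\mathcal F_{st}]\}$---yields $\sqrt{m+n}\,B_{st}=o_p(1)$. Summing the finitely many folds against the bounded weights $M_sN_t/(mn)$ then gives $\sqrt{m+n}\sum_{s}\sum_{t}\frac{M_sN_t}{mn}B_{st}=o_p(1)$, and combining the three terms finishes the proof.

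The step I expect to require the most care is the conditional variance bound for the two-sample U-statistic: one must verify that every component of its Hoeffding decomposition---not just the two linear (H\'ajek) parts but also the fully degenerate remainder---is dominated by $\|h_{st}\|^2$ with a constant that does not depend on $(m,n)$, and one must make the conditional-to-unconditional $o_p$ transfer rigorous. If the orthogonalized kernel $\psi$ has degree higher than $(1,1)$ in its data arguments the decomposition has more pieces, but the bound is unchanged in form---the leading variance term is still of order $\|h_{st}\|^2/\min(M_s,N_t)$ as long as the kernel is square-integrable---so the same argument goes through. Everything else (the exact value of Term~1, the negligibility of Term~3) is immediate from the material already in the excerpt.
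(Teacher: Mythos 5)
Your proposal is correct and follows essentially the same route as the paper's proof: Term~1 is identified with $(\mathbb{U}_{mn}-\mathbb{U})\psi$, Term~3 is dispatched by Assumption~\ref{as: bias}, and Term~2 is shown to be $o_p((m+n)^{-1/2})$ by the standard cross-fitting argument---conditioning on the out-of-fold data so that $B_{st}$ is a centered two-sample U-statistic, bounding its conditional variance by a constant times $(1/M_s+1/N_t)\|\hat\psi_{st}-\psi\|^2$, and transferring to an unconditional $o_p$ statement via conditional Chebyshev and dominated convergence. The variance bound and the conditional-to-unconditional step you flag as delicate are exactly the ingredients the paper uses, so no gap remains.
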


\Cref{lm: asymptotic} provides the main theoretical support for our method.  
A proof is given in \Cref{app:proof_sec3}.
A simple application of U-statistic CLT and Lemma \ref{lm: asymptotic} gives the desired asymptotic normality.

\begin{theorem}
    \label{thm: asymptotic normality}
    Given a two-sample U-statistic kernel $\psi(X,U,\gamma,\alpha)$ satisfying assumptions  \ref{as: sample ratio}, \ref{as: consistency}, and \ref{as: bias}, we have
    \begin{equation*}
        \sqrt{m+n}(\hat{\theta}-\theta_0) \xrightarrow{D} N(0,V),
    \end{equation*} 
    with
    \begin{equation*}
        V = \frac{cov(\psi(X,U),\psi(X,U'))}{\lambda} + \frac{cov(\psi(X,U),\psi(X',U))}{1-\lambda},
    \end{equation*}
    where $X'$ and $U'$ are iid copies of $X$ and $U$.
\end{theorem}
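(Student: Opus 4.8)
The plan is to deduce the theorem from \Cref{lm: asymptotic} and a central limit theorem for two-sample $U$-statistics. By \Cref{lm: asymptotic}, under \Cref{as: sample ratio}, \Cref{as: consistency} and \Cref{as: bias} we have $\sqrt{m+n}(\hat\theta-\theta_0) = \sqrt{m+n}(\mathbb{U}_{mn}-\mathbb{U})\psi + o_p(1)$, so by Slutsky's theorem it suffices to show $\sqrt{m+n}(\mathbb{U}_{mn}-\mathbb{U})\psi \xrightarrow{D} N(0,V)$. Note that here $\psi=\psi(\cdot,\cdot,\gamma,\alpha)$ is evaluated at the \emph{true} nuisance parameters, so it is a fixed (non-random) two-sample kernel of degree $(1,1)$, with $\mathbb{U}\psi = \mathbb{E}_{F\times G}[\psi(X,U)] = \theta_0$; thus we are in the classical setting and no empirical-process machinery is needed at this stage.

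The core step is the Hoeffding/Hájek decomposition of $\mathbb{U}_{mn}\psi$. Writing $\psi_{10}(x) := \mathbb{E}[\psi(x,U)] - \theta_0$ and $\psi_{01}(u) := \mathbb{E}[\psi(X,u)] - \theta_0$ for the two first-order projections, the Hájek projection of $(\mathbb{U}_{mn}-\mathbb{U})\psi$ is the linear statistic $\frac{1}{m}\sum_{i=1}^{m}\psi_{10}(X_i) + \frac{1}{n}\sum_{j=1}^{n}\psi_{01}(U_j)$. The standard projection argument for $U$-statistics shows that, provided $\mathbb{E}[\psi(X,U)^2] < \infty$, the difference between $(\mathbb{U}_{mn}-\mathbb{U})\psi$ and its Hájek projection is a degenerate $(1,1)$-order term whose variance is $O(1/(mn))$; after multiplying by $m+n$ this is $O\big((m+n)/(mn)\big) \to 0$ by \Cref{as: sample ratio}, hence $o_p(1)$. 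It then remains to apply the ordinary CLT to the two independent sample averages $m^{-1/2}\sum_i\psi_{10}(X_i)$ and $n^{-1/2}\sum_j\psi_{01}(U_j)$ and rescale: since $m/(m+n)\to\lambda$ and $n/(m+n)\to 1-\lambda$, we obtain asymptotic normality with limiting variance $\mathrm{Var}(\psi_{10}(X))/\lambda + \mathrm{Var}(\psi_{01}(U))/(1-\lambda)$. Finally, conditioning gives $\mathrm{Var}(\psi_{10}(X)) = \mathrm{Cov}(\psi(X,U),\psi(X,U'))$ and $\mathrm{Var}(\psi_{01}(U)) = \mathrm{Cov}(\psi(X,U),\psi(X',U))$ with $X',U'$ independent copies, which matches $V$ as stated.

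Two points need care. First, the decomposition is genuinely two-sample: the cross-covariance between the $X$-average and the $U$-average vanishes because the two samples are independent, so $V$ has no cross term --- this is exactly why the variance is a sum of two pieces rather than a single one. Second, the $\sqrt{m+n}$ rate and the conclusion $V>0$ implicitly require the kernel to be non-degenerate (at least one of $\psi_{10}$, $\psi_{01}$ nonzero) and $\mathbb{E}[\psi(X,U)^2]<\infty$; these mild conditions should either be folded into the hypotheses or are automatic for the influence-function-type kernels we consider. I expect the only real work is the bookkeeping for the degenerate-remainder variance bound under the imbalanced-but-not-too-imbalanced regime of \Cref{as: sample ratio}; once that is in place, the distributional statement follows immediately from Slutsky's theorem and the classical CLT.
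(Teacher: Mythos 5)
Your proposal is correct and follows essentially the same route as the paper: reduce to $\sqrt{m+n}(\mathbb{U}_{mn}-\mathbb{U})\psi$ via \Cref{lm: asymptotic} and Slutsky, then invoke the two-sample $U$-statistic CLT, which the paper simply cites (Theorem 12.6 of \citet{vaart_asymptotic_1998}) while you unpack it via the standard H\'ajek projection argument. Your remark that $\mathbb{E}[\psi(X,U)^2]<\infty$ and non-degeneracy are needed is apt --- the paper's proof also states the CLT conditionally on the second-moment assumption even though it is not listed in the theorem's hypotheses.
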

\begin{proof}
    By lemma \ref{lm: asymptotic}, we know that $ \sqrt{m+n}(\hat{\theta}-\theta_0)$ is asymptotically equivalent to $\sqrt{m+n} (\U_{mn}-\U)\psi$. 

    By U-statistic central limit theorem (see theorem 12.6 in \cite{vaart_asymptotic_1998}), if $E\psi^2(X,U) < \infty$, then the limiting distribution of $\sqrt{m+n} (\U_{mn}-\U)\psi$ is $N(0,V)$ with $V$ as defined in the theorem statement.
\end{proof}

Constructing confidence intervals or hypothesis tests using the estimates requires a consistent estimator of the variance of the U-statistic.
By the conditional decomposition of total variance, and the fact that ${\rm Cov}[\psi(X,U),\psi(X,U')|X]=0$, we have
$$
{\rm Cov}(\psi(X,U),\psi(X,U'))={\rm Var}(E(\psi(X,U)|X))\,.
$$
This allows us to estimate the variance by working with the variance of the standard projection of the U-statistic $\sqrt{m+n}(\mathbb{U}_{mn} - \mathbb{U})\psi$ by

\begin{equation*}
    \frac{m+n}{n}\var(E[\psi(X,U,\gamma,\alpha)|X]) + \frac{m+n}{m}\var(E[\psi(X,U,\gamma,\alpha)|U]) \rightarrow V.
\end{equation*}

Let $\psi_1(x)=E[\psi(X,U)|X=x]$.  
We can use plug-in estimates to estimate ${\rm Var}(\psi_1(X))$ using the previous cross-fit splits. For example, we can use $$\hat\psi_1(x_i)=\frac{1}{n} \sum_{t=1}^T \sum_{j \in D_t} \psi(x_i,u_j,\gamma_{-st},\alpha_{-st})\,.$$ 
Then we can  take $\hat\sigma_1^2$ as  the empirical variance  of  $\{\hat\psi_1(x_i):1\le i\le m\}$. A similar idea can be applied to obtain $\hat\sigma_2^2$ as the empirical variance of $\hat\psi_2(u_i)$, where $\psi_2(u)=E(\psi(X,U)|U=u)$.  The final asymptotic variance estimate is
$$
\hat V = \frac{m+n}{n}\hat\sigma_1^2+\frac{m+n}{m} \hat\sigma_2^2\,.$$
Under some standard consistency conditions on $\alpha_{-st}$ and $\gamma_{-st}$, it is possible to obtain consistency of $\hat V$ as an estimate of $V$.  The detailed argument will be very similar to those in \citet{escanciano_debiased_2023} for the one-sample case, and is omitted here.

\section{Two-sample conditional distribution test}



With the general theory now developed, we turn to an application to two-sample conditional distribution testing. Suppose we observe two samples, $(X,Y)\sim F$ and $(U,V) \sim G$. We use $X,U$ to denote the covariates of the two separate samples and $Y,V$ the corresponding response. So in this section, the paired random vector $(X,Y)$ corresponds to $X$ in the previous sections, and $(U,V)$ corresponds to $U$ in the previous sections. Let $f,g$ denote the marginal density of $X$ and $U$ respectively. We will use $f_{Y|X}$ and $g_{V|U}$ to denote the corresponding conditional distributions.

We are interested in whether the distribution of the response conditional on the covariates is equal. That is we are interested in testing
\begin{equation*}
    H_0 :~ f_{Y|X} = g_{V|U}\,,   ~~{\text vs }~~ H_1:~ f_{Y|X}\neq g_{Y|U}\,.
\end{equation*}
This hypothesis is of general interest in several important areas of statistical learning, including transfer learning \citep{sugiyama_machine_2012}, predictive inference \citep{peters_causal_2016}, and is closely related to the notion of strong ignorability in causal inference \citep{rosenbaum_central_1983}.

Using ideas from weighted conformal prediction, \citet{hu_two-sample_2023} proposed the test statistic
\begin{equation*}
    \U\varphi(X,Y,U,V,\gamma) = \U\left[\gamma(X)a(X,Y,U,V)\right].
\end{equation*}
where $$\gamma(X)=\frac{g(X)}{f(X)}$$ 
is the marginal density ratio of the covariates under  $G$ and $F$, and 
$$a=\mathds{1}(s(X,Y)<s(U,V))$$ where $s$ is some known function. 
Let 
$$\theta=\U\varphi(X,Y,U,V,\gamma)\,.$$
It is shown in \citet{hu_two-sample_2023}
that
\begin{align*}
    \theta = 1/2 &\text{ under } H_0\,,\\
    \theta < 1/2 & \text{ under } H_1\,.
\end{align*}
if a random tie-breaking is used in the function $a(\cdot)$ to avoid degeneracy.  In fact, the quantity $1/2-\theta$ corresponds to the average total variation distance between the two conditional distributions $f_{Y|X}$ and $g_{V|U}$ \citep[Remark 2]{hu_two-sample_2023}, and can be viewed as a measure of deviance from the covariate shift assumption.

As a consequence, testing $H_0$ can be cast as a problem of constructing confidence intervals for the parameter $\theta$.   To this end, the nuisance parameter $\gamma$ must also be estimated. Without additional assumptions, this parameter is naturally non-parametric so we may be interested in using non-parametric methods such as random forests or neural nets. Moreover, in a high-dimensional setting, we may also wish to use regularized estimators such as LASSO. However, if the test statistic is not orthogonal, the estimate will likely not have the correct type-1 control. Unfortunately, as shown in the following proposition, this is the case.  This proposition also points to the corresponding debiasing scheme.
\begin{proposition}\label{pro:not_orthogonal}
Let $H = \delta(x',y')$ and $K=\delta(u',v')$ be pointwise contaminants at some fixed points $(x',y')$ and $(u',v')$. Consider a one-dimensional sub-model: $\{(F_\epsilon, G_\epsilon):\epsilon\in[0,1]\}$ with $F_\epsilon=(1-\epsilon)F+\epsilon H$, $G_\epsilon=(1-\epsilon)G+\epsilon K$. The pathwise derivative of $\U\varphi(X,Y,U,V,\gamma_\epsilon)$ at $\epsilon=0$ is
    \begin{equation*}
        \frac{d}{d\epsilon}\U [\gamma_\epsilon(X)a(X,Y,U,V)] \big|_{\epsilon = 0} = \alpha(u')-\gamma(x')\alpha(x')\,,
    \end{equation*}
     where 
     \begin{equation}\label{eq:alpha}
     \alpha(X)=E_{U,V\sim G,Y \sim f(y|x)} a(X,Y,U,V)\,.\end{equation}
As a result, the functional $\U\varphi(X,Y,U,V,\gamma)$ is not Neyman orthogonal with respect to the nuisance parameter $\gamma$.
\end{proposition}

\subsection{Debiased test statistic}
To orthogonalize the kernel, we need to find the U-statistic influence function. When $H,K$ are point-mass contaminants as above, notice that, by Proposition \ref{pro:not_orthogonal},
\begin{align*}
        &\frac{d}{d\epsilon}\U[\gamma_\epsilon(X)a(X,Y,U,V)] \big|_{\epsilon = 0}\\
         =& \alpha(u')-\gamma(x')\alpha(x')\\
        = &\int \left[\alpha(u)-\gamma(x)\alpha(x)\right] [dF(x)\delta(u') + \delta(x')dG(u)].
    \end{align*}
    where $\alpha(\cdot)$ is defined as in \eqref{eq:alpha}.

Thus, when $H,K$ are point mass contaminants, the function $\phi(X,Y,U,V,\gamma,\alpha) = \alpha(U)-\gamma(X)\alpha(X)$ satisfies pathwise differentiability, which makes it a candidate for the influence function.
To prove that this is the influence function, we need to show the zero-mean property \eqref{mean zero} and that pathwise differentiability \eqref{pathwise diff} holds for arbitrary paths. 

\begin{theorem}
    \label{cov shift influence function}
    The function  $\phi(X,Y,U,V,\gamma,\alpha) = \alpha(U)-\gamma(X)\alpha(X)$ is a U-statistic influence function for the two-sample U-statistic $\varphi(X,Y,U,V,\gamma) = \gamma(X)a(X,Y,U,V)$.
\end{theorem}

\Cref{cov shift influence function} is proved in \Cref{app:proof_sec4}. As a sanity check that we have the correct influence function, we can verify if the bias is indeed second-order in the estimation error of the nuisance parameters $\gamma$ and $\alpha$. The bias can be computed as
\begin{equation*}
    \begin{split}
    &\U[\hat{\gamma}(X)a(X,Y,U,V)+\hat{\alpha}(U)-\hat{\gamma}(X)\hat{\alpha}(X)] -\U[\gamma(X)a(X,Y,U,V)] \\
    &= \U[(\hat{\gamma}(X)-\gamma(X))a(X,Y,U,V)] +E[\gamma(X)\hat{\alpha}(X)-\hat{\gamma}(X)\hat{\alpha}(X)]\\
    &= E[(\hat{\gamma}(X)-\gamma(X))\alpha(X)] + E[\gamma(X)\hat{\alpha}(X)-\hat{\gamma}(X)\hat{\alpha}(X)]\\
    &= E[(\hat{\gamma}(X)-\gamma(X))(\alpha(X)-\hat{\alpha}(X))]\,,
    \end{split}
\end{equation*}
where the third equality follows from taking expectation with respect to $Y,U$, and $V$. This calculation also highlights the double robustness of this kernel. Our estimate will be unbiased as long as one of $\gamma$ or $\alpha$ is estimated well-enough. In particular, the correct specification of $\alpha$ will cancel out the bias induced by the estimation of $\gamma$. 

Knowing the influence function, we can debias $\varphi$ to get a debiased U-statistic kernel
\begin{equation}\label{eq:debiased-psi-covshift}
    \psi(X,Y,U,V) = \gamma(X)a(X,Y,U,V) + \alpha(U)-\gamma(X)\alpha(X)\,.
\end{equation}

\subsection{Asymptotics}
Now we check the assumptions for asymptotic normality. Recall that we required assumptions \ref{as: sample ratio}, \ref{as: consistency}, and \ref{as: bias} for the debiased U-statistic to be asymptotically normal. We provide sufficient conditions for the debiased test statistic $\psi(X,Y,U,V)$ given in \eqref{eq:debiased-psi-covshift} to satisfy these assumptions and thus be asymptotically normal.

The first batch of conditions are about the consistency of nuisance parameters.
\begin{condition}
\label{cond 1}
For every fold $I_{st}$, we have
\begin{enumerate}
    \item [(a)] $\|\hat{\gamma}(X) - \gamma(X)\| = o_p(1)$
    \item [(b)] $\|\hat{\alpha}(U) - \alpha(U)\| = o_p(1)$
    \item [(c)] $\|\gamma(X)\hat{\alpha}(X) - \gamma(X)\alpha(X)\| = o_p(1).$
\end{enumerate}
\end{condition}
Parts (a) and (b) simply require consistency of $\hat\gamma$ and $\hat\alpha$ in $L_2$-norm.  Part (c) can be implied by part (b) if $\|\gamma\|_\infty<\infty$.


\begin{condition}
\label{cond 3}
    For each fold $(s,t)$, we have that $\|\hat{\gamma} - \gamma\|\cdot\|\hat{\alpha} - \alpha\| = o_p\left(\frac{1}{\sqrt{m+n}}\right)$.
\end{condition}
Condition \ref{cond 3} is at the core of the debiasing technique.  It allows us to estimate both nuisance functions $\gamma$ and $\alpha$ at a rate strictly worse than the parametric rate, as long as their product is dominated by the parametric rate.

\begin{theorem}
\label{thm test normal}
    Under Assumption \ref{as: sample ratio}, Conditions \ref{cond 1} and \ref{cond 3}, the debiased U-statistic 
    $$\U_{mn}\psi(X,Y,U,V) =\U_{mn}\left[ \gamma(X)a(X,Y,U,V) + \alpha(U)-\gamma(X)\alpha(X)\right]$$ is asymptotically normal. 
\end{theorem}

To implement the test, we need an estimate for the asymptotic variance. The asymptotic variance we need to estimate is the variance of    $\psi(X,Y,U,V,\gamma,\alpha)$ in \eqref{eq:debiased-psi-covshift}.
Standard U-statistic theory implies that the variance is 
\begin{equation*}
    \frac{\var(E[\psi(X,Y,U,V,\gamma,\alpha)|X,Y])}{\lambda} + \frac{\var(E[\psi(X,Y,U,V,\gamma,\alpha)|U,V])}{1-\lambda},
\end{equation*}
where $\lambda = \lim_{m.n\rightarrow \infty} \frac{m}{m+n}$ as specified in Assumption \ref{as: sample ratio}.

Following the discussion in section 3, we estimate the variance as
\begin{align}
        \hat{\sigma}^2 &=\frac{\frac{1}{m}\sum_{s=1}^S\sum_{i \in C_s}\left(\frac{1}{n}\sum_{t=1}^T\sum_{j \in D_t}\psi(x_i,y_i,u_j,v_j,\gamma^{-st},\alpha^{-st})-0.5\right)^2}{\lambda}\nonumber\\
        &+ \frac{\frac{1}{n}\sum_{t=1}^T\sum_{j\in D_t}\left(\frac{1}{m}\sum_{s=1}^S\sum_{i \in C_s}\psi(x_i,y_i,u_j,v_j,\gamma^{-st},\alpha^{-st})-0.5\right)^2}{1-\lambda}\,,\label{var est}
\end{align}
where, $\gamma^{-st}$ and $\alpha^{-st}$ means the estimate of $\gamma$ and $\alpha$ using points outside of the fold $C_s \times D_t$.

\subsection{Implementation}

We outline the implementation of the debiased covariate shift test. For convenience of notation, we index our samples by $\{(X_i,Y_i)\}_{i=1}^{2m}$ and $\{U_j,V_j\}_{j=1}^{2n}$.

\begin{enumerate}
    \item Use data $\{(X_i,Y_i)\}_{i=m}^{2m}$ and $\{U_j,V_j\}_{j=n}^{2n}$ to get an estimate $\hat{a}$ of $a$
    \item Randomly split $\{(X_i,Y_i)\}_{i=1}^{m}$ into folds $C_1,...,C_S$ and $\{U_j,V_j\}_{j=1}^{n}$ into folds $D_1,...,D_T$.
    \item For each $(s,t)$, estimate $\gamma_{st}$ and $\alpha_{st}$ using $C_s\times D_t$ and compute $\theta_{st}$ as in Equation \eqref{eq:theta_st}.
    \item Aggregate the on fold estimates to get $\hat{\theta}$ as in \eqref{eq:theta_crossfitting}.
    \item Compute $\hat{\sigma}^2$ using \eqref{var est}.
    \item Compute the standardized test statistic $\hat{T} = \frac{(.5-\hat{\theta})}{\hat{\sigma}/\sqrt{m+n}}$.
    \item Reject the null if $\hat{T}> \Phi^{-1}(1-\alpha)$.
\end{enumerate}

\paragraph{Degenerate Test Statistic}
Under the null, the nuisance parameter $s$, the conditional density ratio, is identically $1$. Taking $s=1$ results in a degenerate U-statistic kernel to which the above theory does not apply. In practice, we set a side part of the data data to get an estimate $\hat{s}$. We treat this $\hat{s}$ as fixed and then apply the double/debiased framework. If the fitted function $\hat s$ has a discrete distribution, we can use a random tie-breaking in the definition of $a$. In particular, let $(\zeta_i,~1\le i\le m)$ and $(\zeta_j'~,1\le j\le n)$ be independent $U(0,1)$ random variables that are also independent of the data.  Define 
$$
a(X_i,Y_i,U_j,V_j)=\mathds{1}(\hat s(X_i,Y_i)<\hat s(U_j,V_j))+\mathds{1}(\zeta_i<\zeta_j')\mathds{1}(\hat s(X_i,Y_i)=\hat s(U_j,V_j))\,. 
$$
Then the U-statistic kernel is guaranteed to be non-degenerate.

\paragraph{Estimation of Density Ratios}
The nuisance parameters $\gamma$ and $s$ both involve the estimation of density ratios. There is plenty of literature on methods for estimating density ratios \citep{sugiyama_density_nodate}. We recommend using a classification method for density ratio estimation. There are two major benefits to using this method. First, these methods do not involve the estimation of densities directly. Second, this method only requires a classification method of which there are many machine learning methods available. 

Let's illustrate the classification method for marginal density ratio $\gamma = \frac{g(X)}{f(X)}$. Suppose we have observations $\{X_i\}_{i=1}^m$ and $\{U_i\}_{i=1}^n$. Assign the $X$ sample to class $0$ and the $U$ sample to class $1$. Using any classification method, train a model to predict $\eta(x):= P(1|X=x)$. Then we can estimate $\gamma$ by
\begin{equation*}
    \hat{\gamma}(X) = \frac{n}{m}\frac{1-\hat{\eta}(X)}{\hat{\eta}(X)}.
\end{equation*}

\paragraph{Estimation of $\alpha$}
The final nuisance parameter we need to estimate is $\alpha$. We can treat this as a regression problem in the following sense. Consider the function
\begin{equation*}
    a^*(X,Y) = \frac{1}{n}\sum_{j=1}^n \hat{a}(X,Y,U_j,V_j).
\end{equation*}
We can then get an estimate of $\alpha$ by regressing $a^*$ on $X$. This method allows the use of any regression method for estimating $\alpha$.


\section{Simulations}




We evaluate the finite sample performance of our method in three different settings. The first setting is the simplest where the nuisance parameter comes from a low-dimensional parametric model.

We also examine the effectiveness of our test when the nuisance parameter is more complex and requires flexible estimators. These settings involve a high-dimensional parametric model which requires regularized estimators and a non-parametric example using NASA airfoil data \citep{misc_airfoil_self-noise_291}. These settings are similar to the ones studied in \citet{hu_two-sample_2023}.

\subsection{The low-dimensional setting}
In the low-dimensional parametric setting, we sample covariates $X \sim N(0,I)$ and $U \sim N(\mu,I)$ where $\mu=(1,-1,1,-1,0)^T$ and $I$ is the identity matrix. The response is given by $y = \beta^T x + \epsilon$ and $v = \alpha + \beta^T u + \epsilon$ where $\epsilon \sim N(0,1)$, $\beta = (1,1,1,1,1)$ and $\alpha=0$ in the null case and $\alpha=0.5$ in the alternate case.

To estimate all nuisance parameters we use logistic regression. This is a parametrically correct model for the density ratios and performs well enough for estimating $\alpha$ to get correct inference. All results are calculated empirically with 500 trials. All tests are conducted with nominal type-1 control at $0.05$.

\paragraph{Density ratio cutoff}
Following the data-generating procedure in \citet{hu_two-sample_2023}, we will cutoff the range of the density ratios. We remove sample points where the true marginal density ratio lies outside of the interval $[1/50,50]$. This cutoff allows for more stable density ratio estimation which is required for correct inference in our method. 

\paragraph{Sample split ratio}
To implement this method we need to make a sample split corresponding to the nuisance parameter $s$, the full density ratio, and $\gamma$ and $\alpha$ the nuisance parameters used in cross-fitting. For better performance, we will use 2/3 of the data for $s$ and 1/3 of the available data for the cross-fitting procedure. In the simulations that follows, $n$ denotes the sample size used in the cross-fitting procedure while $2n$ samples will be used for estimating $s$.

\begin{table}[H]
\centering
\caption{Simulation results for the debiased and plugin test statistics in the low-dimensional setting. The debiased test statistic is unbiased (up to two decimal places) for all sample sizes, and the type-1 control is around the $0.05$ nominal level in all cases. As the sample size increases, the power increases to $1$. The results for the plugin test statistic show correct type-1 error and power going to $1$. However, there is a slight upward bias leading to a conservative test and worse empirical power compared to the debiased test.}
\begin{tabular}{l|lll|lll}
\multirow{2}{*}{n} & \multicolumn{3}{l|}{Debiased} & \multicolumn{3}{l}{Plugin}   \\
                   & bias  & type-1 error  & power & bias  & type-1 error & power \\ \hline
250                & 0.00  & 0.034         & 0.64  & 0.11  & 0.020        & 0.32  \\
500                & 0.00  & 0.060         & 0.84  & 0.043 & 0.014        & 0.50  \\
1000               & 0.00  & 0.062         & 0.99  & 0.023 & 0.012        & 0.80  \\
2000               & 0.00  & 0.044         & 1.00  & 0.011 & 0.016        & 0.98 
\end{tabular}
\label{tb: low-dim sim}
\end{table}

Table \ref{tb: low-dim sim} show the simulation results for both the debiased and plug-in test statistics in the low-dimensional setting. In the debiased setting, the empirical bias, the difference between empirical mean of the estimates over the 500 trials and the true value $0.5$, is $0$ up to two decimal places and is close to the correct nominal type-1 error of $0.05$ with power going to $1$. For the plug-in estimator, the test is slightly biased leading to conservative results. We see that the debiased test leads to a significant improvement in the alternate setting.

\subsection{The high-dimensional setting}
In the high-dimensional parametric setting, we sample covariates $X \sim N(0,I)$ and $U \sim N(\mu,I)$ where $\mu=(1,-1,1,-1,0,...,0)^T$ and $I$ is the identity matrix. Here $\mu$ is a $500$-dimensional vector. The response is given by $y = \beta^T x + \epsilon$ and $v = \beta_0 + \beta^T u + \epsilon$ where $\epsilon \sim N(0,1)$, $\beta = (1,1,1,1,1,0,...,0)$ and $\beta_0=0$ in the null case and $\beta_0=0.25$ in the alternate case.

We again conduct all tests at $0.05$ level type-1 control. All empirical bias, type-1 error and power are calculated over $500$ trials. As we are in a high-dimensional setting, we will use LASSO to estimate the nuisance parameters $\gamma$ and $\alpha$. In this case, LASSO is the correct model for $\gamma$ but in general does not converge at the parametric rate, which is why debiasing is required.

To estimate $s$, we will use stability selection \citep{meinshausen_stability_2010} for variable selection and then logistic regression. The reason for this is that $\alpha$ depends on $s$ through an indicator function. LASSO for $s$ might still return many non-zero but small coefficients. However, for an indicator function, magnitude does not matter. In this case, estimating $\alpha$ may be a non-sparse high-dimensional problem which is highly difficult. Applying a variable selection method first such as using stability selection will force estimating $\alpha$ to be intrinsically low-dimensional, leading to better results. 

\begin{figure}[H]
\centering
\begin{subfigure}{.5\textwidth}
  \centering
  \includegraphics[width=8cm]{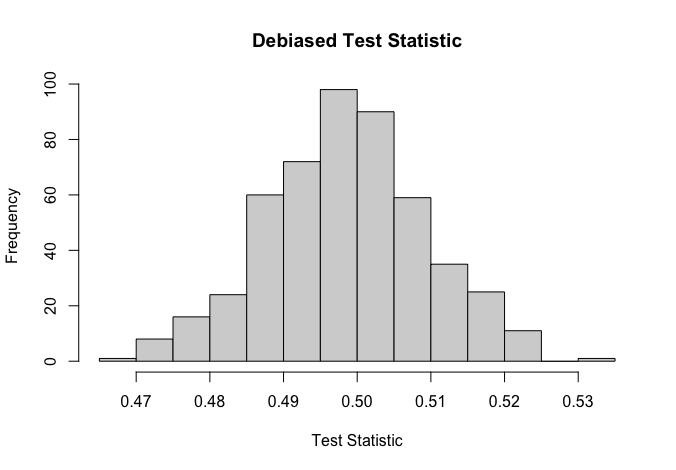}
  \caption{Histogram of debiased test statistic.}
  \label{fig:debiased}
\end{subfigure}%
\begin{subfigure}{.5\textwidth}
  \centering
  \includegraphics[width=8cm]{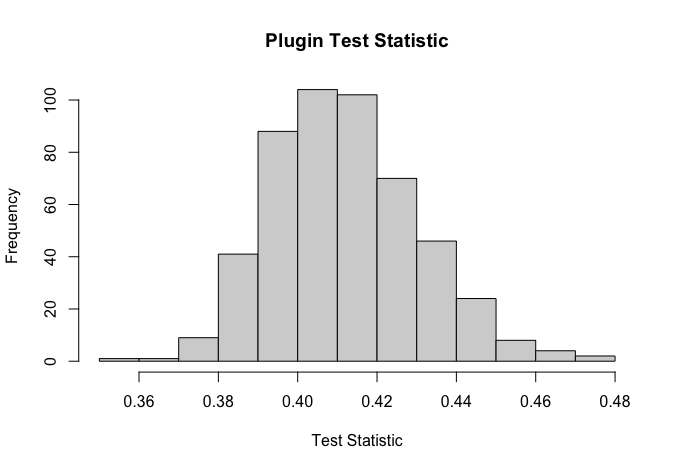}
  \caption{Histogram of plug-in test statistic.}
  \label{fig:plugin}
\end{subfigure}
\caption{Illustration of the debiasing effect in the high-dimensional setting. The empirical distribution of the debiased test statistic on the left is approximately normal and centered at the correct value of $0.5$. Meanwhile, the empirical distribution is centered at a value significantly less than $0.5$.}
\label{fig:test}
\end{figure}

\begin{table}[H]
\centering
\caption{\label{tb: high-dim sim}Simulation results for both the debiased and plugin test statistics in the high-dimensional setting. The debiased test has significantly less bias and much better type-1 control than the plugin test.}
\begin{tabular}{l|lll|lll}
\multirow{2}{*}{n} & \multicolumn{3}{l|}{Debiased}  & \multicolumn{3}{l}{Plugin}    \\
                   & bias    & type-1 error & power & bias   & type-1 error & power \\ \hline
250                & -0.0068 & 0.15         & 0.61  & -0.09  & 0.61         & 0.73  \\
500                & -0.0036 & 0.10         & 0.82  & -0.11  & 0.40         & 0.93  \\
1000               & -0.0030 & 0.13         & 0.95  & -0.10  & 0.31         & 1.00  \\
2000               & -0.0015 & 0.076        & 1.00  & -0.088 & 0.25         & 1.00 
\end{tabular}
\end{table}

\Cref{tb: high-dim sim} shows the results of the debiased and plugin tests for the high dimensional test. The plugin estimator has a significantly larger empirical bias as seen in Figure \ref{fig:test}. As a result, the debiased test has significantly better type-1 control compared to the plugin test. 

\subsection{Airfoil Data}
We demonstrate the effectiveness of our method on real data through the airfoil dataset \citep{misc_airfoil_self-noise_291} previously studied in \citet{hu_two-sample_2023} and \citet{tibshirani_conformal_2020}. This dataset collected by NASA studied the sound pressure of various airfoils. The dataset consists of 5 covariates, log frequency, angle of attack, chord length, free-stream
velocity, suction side log displacement thickness, and a response variable scaled sound pressure. 

The airfoil dataset does not naturally have distinct samples. We artificially split the $n=1503$ observations into two samples in the following ways:
\begin{enumerate}
    \item Random partition
    \item Exponential tilting
    \item Partition along velocity covariate
    \item Partition along the response 
\end{enumerate}

In the first three partitions, the two samples satisfy the covariate shift assumption. For the first partition, both samples come from the same distribution, while in the second and third partitions, there is a non-trivial covariate shift. In the last partition, the covariate shift assumption is not satisfied. 

\paragraph{Random Sampling}

In this setting, the observations are randomly split into two groups to create a two sample scenario in the null case. The covariate distributions in the two groups are identical so the covariate shift assumption is satisfied. We tried this method using linear logistic regression and SuperLearner. Superlearner is a stacking method that combines estimates from multiple ML methods into a single estimate by weighting, where the weights are chosen by cross-validation \citep{van_der_laan_super_2007}. We use SuperLearner with linear regression, SVM, random forests, and xgboost to estimate the nuisance parameters. The random sampling was repeated $500$ times and the mean estimate and rejection proportions are recorded in Table \ref{tb: random sampling}. Both methods on average correctly estimate the test statistic to be $0.5$ and the rejection proportion is close to the desired $0.05$ level.

\begin{table}[H]
\centering
\caption{Random splitting. Mean estimates of the test statistic and rejection proportion over 500 repetitions using linear logistic regression and SuperLearner to estimate nuisance parameters. Both methods correctly estimate the test statistic and achieve the correct $0.05$ type-1 error control}

\begin{tabular}{l|ll}
             & Mean    & Rejection Proportion \\ \hline
Linear       & 0.500 & 0.048                \\
SuperLearner & 0.500 & 0.048               
\end{tabular}
\label{tb: random sampling}
\end{table}

\paragraph{Exponential Tilting}
To create a non-trivial covariate shift setting, we use exponential tilting to form two samples. We first randomly split the data into two samples. The first sample is kept as it is. In the second sample, we resample with replacement with distribution $\exp(x^T\alpha)$ where $\alpha=(-1,0,0,0,1)$. This is the same exponential tilting setting as in \cite{hu_two-sample_2023}. Table \ref{tb: exp tilt} shows the results where the nuisance parameters are estimated with both logistic regression and superlearner.  We again see that we get an approximate $0.05$ empirical type-1 error and the average of the test statistics is close to $0.5$ as expected.

\begin{table}[H]
\caption{Exponentially tilted sampling splitting. Mean estimates of the test statistic and rejection proportion over 500 repetitions using linear logistic regression and SuperLearner to estimate nuisance parameters. In both cases, the mean estimate is close to the correct value of $0.5$ and the rejection proportion is close to the correct nominal type-1 error of $0.05$.}
\centering
\begin{tabular}{l|ll}
             & Mean  & Rejection Proportion \\ \hline
Linear       & 0.498 & 0.056                \\
SuperLearner & 0.496 & 0.056               
\end{tabular}
\label{tb: exp tilt}
\end{table}

\paragraph{Velocity Partition}
We also form a non-trivial covariate shift setting potentially in the alternative case by splitting the data into two samples based on the velocity covariate. This is done by splitting exactly at the median and then randomly flipping five percent of the observations to the other sample. In this setting, we only get a single realization of the data. Following the setting in \citet{hu_two-sample_2023}, we aggregate p-values by using the twice the median p-value \citep{diciccio_exact_2020} over the randomization of the flips and sample splits. Table \ref{tb: velocity partition} shows the results where both nuisance parameters are estimated using logistic regression, where $\gamma$ is estimated using logistic regression and $\alpha$ is estimated using SuperLearner, and where both are estimated using super learner. We see that only when both are estimated using SuperLearner does the test correctly not reject the null. Previous work has shown that non-parametric estimators of the nuisance parameters perform better than linear methods in this setting \citep{hu_two-sample_2023}.

\begin{table}[H]
\caption{Velocity partition. In this setting the covariate shift assumption is satisfied as the partition is along a covariate. Mean estimates of the test statistic, median standard errors and rejection proportion using linear logistic regression and SuperLearner to estimate nuisance parameters. When Superlearner is used for both nuisance parameters, the test correctly does not reject the null.}
\centering
\begin{tabular}{l|lll}
                                     & Mean & Median Standard Error & p-value \\ \hline
Linear                               & 0.38 & 0.019                 & 0       \\
Linear $\hat\gamma$, SL $\hat\alpha$ & 0.40 & 0.012                 & 0       \\
SL                                   & 0.52 & 0.056                 & 1   
\end{tabular}
\label{tb: velocity partition}
\end{table}

\paragraph{Sound Partition}
In this setting, we divide into two samples based on the sound variable in the same way as the velocity partition. However, as sound is the response, this will give a two-sample split that does not satisfy the covariate shift assumption. We tried the test using the same settings as the velocity partition. In this case, all tests correctly reject the null.

\begin{table}[H]
\caption{Sound partition. In this setting, the covariate shift assumption is not satisfied as the partition is along the response. Mean estimates of the test statistic, median standard errors, and rejection proportion using linear logistic regression and SuperLearner to estimate nuisance parameters. In all cases, the test correctly rejects the null.}
\centering
\begin{tabular}{l|lll}
                                     & Mean  & Median Standard Error & p-value \\ \hline
Linear                               & 0.018 & 0.044                 & 0       \\
Linear $\hat\gamma$, SL $\hat\alpha$ & 0.12  & 0.033                 & 0       \\
SL                                   & 0.27  & 0.067                 & 0.0005 
\end{tabular}
\end{table}


\section{Discussion}

This work extends the double/debiased machine learning framework to two sample problems allowing flexible analysis of two-sample functionals with nuisance parameters. In general, the estimation of these functionals with asymptotic normality is possible under (strong) assumptions that nuisance parameters are parametric. However, in many interesting settings, we are interested in non-parametric or regularized estimation of the nuisance parameter. These cases involve using estimators for the nuisance parameter with slower than $\sqrt{n}$ convergence rates. Typically, this will result in an estimate which is not asymptotically normal. Under the debiased/double machine learning framework, we alter the functional by adding a mean-zero function so that the target parameter is still the same but the impact of the estimation error for the nuisance parameters is at most second-order. This allows for slower than $\sqrt{n}$ convergence of nuisance parameter estimates while still attaining asymptotic normality.

As an application of this method and theory, we look at conformal-based conditional distribution testing, specifically a two-sample test for covariate shift. In this setting, we consider the conformal rank-sum test statistic developed in \cite{hu_two-sample_2023}, where the validity depends on the estimation of the marginal density ratios of the two samples. In many settings, the density ratio may be nonparametric or high-dimensional requiring the use of black-box ML or regularized methods which have slower convergence rates. Naturally, this fits into the double/debiased setting. In our work, we show how to debias the test statistic by augmenting the test statistic with the two-sample U-statistic influence function. Using this debiased test statistic along with cross-fitting leads to a valid test statistic with milder conditions on the accuracy of the nuisance parameter estimation.

\medskip
\nocite{*}
\bibliographystyle{plainnat}
\bibliography{bibliography}

\appendix

\section{Proofs from Sections 2 and 3}\label{app:proof_sec3}


\begin{proof}[Proof of \Cref{thm:neyman_ortho}]
    Consider differentiating both sides of \ref{mean zero} by $\epsilon$ and evaluating at zero. The derivative under the integral with the chain/product rule gives
\begin{equation*}
    \begin{split}
    &\frac{d}{d\epsilon} \phi(x,u,\gamma_\epsilon,\alpha_\epsilon) dF_\epsilon(x) dG_\epsilon(u)\big|_{\epsilon=0}\\
    &= \frac{d}{d\epsilon} \phi(x,u,\gamma_\epsilon,\alpha_\epsilon)\big|_{\epsilon=0} dF_0(x)dG_0(u) + \phi(x,u,\gamma_0,\alpha_0) \frac{d}{d\epsilon} dF_\epsilon(x)dG_\epsilon(u)\big|_{\epsilon=0}.
    \end{split}
\end{equation*}

Using more differentiation rules, we can simplify
\begin{equation*}
    \begin{split}
        \frac{d}{d\epsilon} dF_\epsilon(x)dG_\epsilon(u)\big|_{\epsilon=0} &= dF_0(x)\frac{d}{d\epsilon}dG_\epsilon(u)\big|_{\epsilon=0} + \frac{d}{d\epsilon}dF_\epsilon(x)\big|_{\epsilon=0}dG_0(u)\\
        &= dF_0(x)(dK-dG_0)(u) + dG_0(u)(dH-dF_0)(x)\\
        &= dF_0(x)dK(u)+dH(x)dG_0(u)-2dF_0(x)dG_0(u).
    \end{split}
\end{equation*}

Plugging all computations back in the integral and using that
\begin{equation*}
    \int\int \phi(x,u,\gamma_0,\alpha_0) F_0(dx)G_0(du) = 0,
\end{equation*}
we get the equation
\begin{equation*}
    \frac{d}{d\epsilon} \U[\phi(X,U,\gamma_\epsilon,\alpha_\epsilon)]\big|_{\epsilon=0} +  \int\int \phi(x,u,\gamma_0,\alpha_0) [dF_0(x)dK(u)+dH(x)dG_0(u)]= 0.
\end{equation*}
Using condition \ref{pathwise diff} on the second term then gives 
\begin{equation*}
    \frac{d}{d\epsilon} \U[\phi(X,U,\gamma_\epsilon,\alpha_\epsilon)]\big|_{\epsilon=0} + \frac{d}{d\epsilon} \U[\varphi(X,U,\gamma_\epsilon)]\big|_{\epsilon=0}= 0 \,,  
\end{equation*}
which is the desired result
\begin{equation*}
    \frac{d}{d\epsilon}\U[\psi(X,U,\gamma_\epsilon,\alpha_\epsilon)]\big|_{\epsilon=0} = 0.\qedhere
\end{equation*}
\end{proof}

\begin{proof}[Proof of Lemma \ref{lm: asymptotic}]
Assumption \ref{as: bias} already gives that the third term is asymptotically negligible. We only need to show the same holds for term 2. 
That is, under assumptions \ref{as: consistency} and \ref{as: sample ratio}, 
\begin{equation*}
    \sum_{s \in [S]} \sum_{t \in [T]}\frac{N_s M_t}{mn} B_{st} = o_p\left(\frac{1}{\sqrt{m+n}}\right).
\end{equation*}
Consider a single fold indexed by $(s,t)$. We have that 
\begin{equation}
\label{iter expansion}
    P\left(\left|\frac{M_s N_t}{mn} B_{st}\right| \geq t\right) = E\left[P\left(\left|\frac{M_s N_t}{mn} B_{st}\right| \geq t \mid I_{st}^c\right)\right],
\end{equation}
where $I_{st}^c$ denotes the observations outside the fold $C_s\times D_t$. The inside term on the right side can be bounded using Chebyshev's bound. Note that by conditioning on $I_{st}^c$, the estimated nuisance parameters are fixed. We need to compute the variance
\begin{equation*}
    \begin{split}
        &\var\left(\frac{M_s N_t}{mn} B_{st} \mid I_{st}^c\right)\\
         = & \var\left(\frac{1}{mn}\sum_{(i,j)\in I_{st}} \psi(X_i,U_j,\hat{\gamma}_{-st},\hat{\alpha}_{-st}) - \psi(X_i,U_j,\gamma_0,\alpha_0)\mid I_{st}^c\right)\\
        \leq & \frac{1}{(mn)^2} \sum_{i,j\in I_{st}} \sum_{p,q\in I_{st}}\cov(\psi(X_i,U_j,\hat{\gamma}_{-st},\hat{\alpha}_{-st}) - \psi(X_i,U_j,\gamma_0,\alpha_0)\,,\\
        &~~\psi(X_p,U_q,\hat{\gamma}_{-st},\hat{\alpha}_{-st}) - \psi(X_p,U_q,\gamma_0,\alpha_0)\mid I_{st}^c)\\
        \leq & \frac{\var(\hat{\psi}_{st} - \psi|I_{st}^c)}{mn} + \frac{m-1}{mn}\var(\hat{\psi}_{st} - \psi|I_{st}^c) + \frac{n-1}{mn}\var(\hat{\psi}_{st} - \psi|I_{st}^c)\\
        \leq & \frac{\|\hat{\psi}_{st} - \psi\|^2}{mn} + \frac{m-1}{mn}\|\hat{\psi}_{st} - \psi\|^2 + \frac{n-1}{mn}\|\hat{\psi}_{st} - \psi\|^2.
    \end{split}
\end{equation*}
The fourth line follows by counting the number of terms in common between $X_i,U_j,X_p,U_q$, noting that when these are all distinct the covariance term is $0$ by independence, and $(X_i,U_j)$ term appears in $M_s+N_t$ non-zero covariance terms, and control the covariances by Cauchy-Schwartz. Plugging the above computation into \ref{iter expansion}, we have
\begin{equation}
    \label{term 2 chebyshev}
    P\left(\left|\frac{N_s M_t}{mn} B_{st}\right| \geq t\right) \leq \frac{1}{t^2}\left[\frac{\|\hat{\psi}_{st} - \psi\|^2}{mn} + \frac{m-1}{mn}\|\hat{\psi}_{st} - \psi\|^2 + \frac{n-1}{mn}\|\hat{\psi}_{st} - \psi\|^2\right].
\end{equation}

    Using equation \ref{term 2 chebyshev}, we see that
    \begin{equation*}
        \begin{split}
            &P\left(\left|\sqrt{m+n}\frac{N_s M_t}{mn} B_{st}\right| \geq t\big| I_{st}^c\right)\\
            &\leq \frac{1}{t^2}\left[\frac{(m+n)}{mn}\|\hat{\psi}_{st} - \psi\|^2 + \frac{(m+n)(m-1)}{mn}\|\hat{\psi}_{st} - \psi\|^2 + \frac{(m+n)(n-1)}{mn}\|\hat{\psi}_{st} - \psi\|^2\right].
        \end{split}
    \end{equation*}

    See that
    \begin{equation*}
        \frac{m+n}{mn} = \frac{1}{n} + \frac{1}{m} \rightarrow 0.
    \end{equation*}

    We also have that
    \begin{equation*}
        \lim \frac{(m+n)(m-1)}{mn} = \lim \frac{m+n}{n} < \infty
    \end{equation*}
    and 
    \begin{equation*}
        \lim\frac{(m+n)(n-1)}{mn} =\lim \frac{m+n}{m} < \infty. 
    \end{equation*}
    The result follows since $\|\hat{\psi}_{st} -\psi\|=o_p(1).$
\end{proof}

\section{Proofs from Section 4}\label{app:proof_sec4}


\begin{proof}[Proof of Proposition \ref{pro:not_orthogonal}]
    We consider  for the path where $H = \delta(x',y')$ and $K=\delta(u',v')$ are pointwise contaminants at some fixed points $(x',y')$ and $(u',v')$. The nuisance function is then
    \begin{equation*}
        \gamma_\epsilon(x) = \frac{(1-\epsilon)g(x)+\epsilon\delta(u')}{(1-\epsilon)f(x)+\epsilon\delta(x')}.
    \end{equation*}
    Using derivative rules, see that
    \begin{equation*}
        \frac{d}{d\epsilon}\gamma_\epsilon(x)\big|_{\epsilon = 0} = \frac{\delta(u')}{f(x)} - \gamma(x)\frac{\delta(x')}{f(x)}.
    \end{equation*}
    Then we have that
    \begin{equation*}
        \frac{d}{d\epsilon}\U [\gamma_\epsilon(X)a(X,Y,U,V)] \big|_{\epsilon = 0} = \U \left[\left(\frac{\delta(u')}{f(X)} - \gamma(X)\frac{\delta(x')}{f(X)}\right)a(X,Y,U,V)\right].
    \end{equation*}

    Take the terms separately. We first consider
    \begin{equation*}
        \begin{split}
            \U\left[\frac{\delta(u')}{f(X)}a(X,Y,U,V)\right] &= E_{X\sim f}\left[ \frac{\delta(u')}{f(X)} \alpha(X)\right]\\ &=
            \int \alpha(x)\frac{\delta(u')}{f(x)} f(x) dx \\ &=
            \alpha(u'),
        \end{split}
    \end{equation*}
    where $\alpha(X)=E_{U,V\sim G,Y \sim f(y|x)} a(X,Y,U,V)$ and the first equality comes from integrating out $Y,U,V.$ We handle the second term in a similar way.

    \begin{equation*}
        \begin{split}
            \U\left[\gamma(X)\frac{\delta(x')}{f(X)}a(X,Y,U,V)\right] &= E_{X\sim f}\left[ \gamma(X)\frac{\delta(x')}{f(X)} \alpha(X)\right]\\ &=
            \int \gamma(x)\alpha(x)\frac{\delta(x')}{f(x)} f(x) dx \\ &=
            \gamma(x')\alpha(x').            
        \end{split}
    \end{equation*}
    Combining this all together shows that 
    \begin{equation*}
        \frac{d}{d\epsilon}\U [\gamma_\epsilon(X)a(X,Y,U,V)] \big|_{\epsilon = 0} = \alpha(u')-\gamma(x')\alpha(x')\,. \qedhere
    \end{equation*}
\end{proof}

\begin{proof}[Proof of Theorem \ref{cov shift influence function}]
    We start by showing the zero-mean condition. We have a path ending at $(H,K).$ Let $f_\epsilon$ and $g_\epsilon$ denote the $X$ and $U$ marginals of $F_\epsilon$ and $G_\epsilon$.
    Then we compute
    \begin{equation*}
        \begin{split}
            &\int\int \alpha_\epsilon(u)-\gamma_\epsilon(x)\alpha_\epsilon(x) dF_\epsilon(x,y)dG_\epsilon(u,v)\\
            &= \int \alpha_\epsilon(u) g_\epsilon(u) du - \int \gamma_\epsilon(x)\alpha_\epsilon(x) f_\epsilon(x) dx\\
            &= \int \alpha_\epsilon(u)g_\epsilon(u) du - \int \alpha_\epsilon(x)g_\epsilon(x) dx = 0.
        \end{split}
    \end{equation*}

    Next, we need to check pathwise differentiability.
    Let $h,k$ denote the $X$ and $U$ marginals of $H$ and $K$. First, we compute the left side of condition \ref{pathwise diff}
    \begin{equation*}
        \frac{d}{d\epsilon}\U [\gamma_\epsilon(X)a(X,Y,U,V)] \big|_{\epsilon = 0} = \U\left[\left(\frac{k(X)}{f(X)} - \gamma(X)\frac{h(X)}{f(X)}\right)a(X,Y,U,V)\right].
    \end{equation*}
    Breaking these terms up again, we first compute
    \begin{equation*}
        \begin{split}
            \U\left[\frac{k(X)}{f(X)}a(X,Y,U,V)\right] &= \int \frac{k(x)}{f(x)}\alpha(x) f(x)d(x)\\
            &= \int \alpha(x)k(x)dx.
        \end{split}
    \end{equation*}
    Here the steps of integrating out $Y,U,V$ to get $\alpha$ are the same as in the point mass case.
    Then the other term is
    \begin{equation*}
    \begin{split}
        \U\left[\gamma(X)\frac{h(X)}{f(X)}a(X,Y,U,V)\right] &= \int \gamma(x) \frac{h(x)}{f(x)}\alpha(x) f(x)dx\\
        &= \int \gamma(x)\alpha(x) h(x) dx.
    \end{split}
    \end{equation*}
    Thus, 
    \begin{equation*}
         \frac{d}{d\epsilon}\U [\gamma_\epsilon(X)a(X,Y,U,V)] \big|_{\epsilon = 0} = \int \alpha(x)k(x)dx - \int \gamma(x)\alpha(x) h(x) dx.
    \end{equation*}

    Now we compute from the right side of condition \ref{pathwise diff}. That is, we want to compute
    \begin{equation*}
        \int\int \alpha(u)-\gamma(x)\alpha(x)[dF_0(x,y)dK(u,v) + dH(x,y)dG_0(u,v)].
    \end{equation*}
    First compute
    \begin{equation*}
        \begin{split}
            \int\int \alpha(u)-\gamma(x)\alpha(x) dF_0(x,y)dK(u,v) &= \int\int (\alpha(u)-\gamma(x)\alpha(x)) f(x)k(u) dxdu\\
            &= \int \alpha(u) k(u) du - \int \gamma(x)\alpha(x)f(x)dx\\
            &= \int \alpha(u)k(u)du - \int \alpha(x)g(x)dx.
        \end{split}
    \end{equation*}
    The other term is
    \begin{equation*}
        \begin{split}
            \int\int \alpha(u)-\gamma(x)\alpha(x) dH(x,y)dG_0(u,v) &= \int\int (\alpha(u)-\gamma(x)\alpha(x)) h(x)g(u) dxdu\\
            &= \int \alpha(u) g(u) du - \int \gamma(x)\alpha(x)h(x)dx
        \end{split}
    \end{equation*}
    Putting these together, we see
    \begin{equation*}
        \begin{split}
            &\int\int \alpha(u)-\gamma(x)\alpha(x)[dF_0(x,y)dK(u,v) + dH(x,y)dG_0(u,v)]\\
            &= \int \alpha(u)k(u)du - \int \alpha(x)g(x)dx + \int \alpha(u) g(u) du - \int \gamma(x)\alpha(x)h(x)dx\\
            &= \int \alpha(u)k(u)du - \int \gamma(x)\alpha(x)h(x)dx.
        \end{split}
    \end{equation*}
    Thus,
    \begin{equation*}
        \frac{d}{d\epsilon}\U[\gamma_\epsilon(X)a(X,Y,U,V)] \big|_{\epsilon = 0} = \int\int \alpha(u)-\gamma(x)\alpha(x)[dF_0(x,y)dK(u,v) + dH(x,y)dG_0(u,v)]
    \end{equation*}
    so we have pathwise differentiability.
\end{proof}

\begin{proof}[Proof of Theorem \ref{thm test normal}]
    We just need to show that under the given conditions, the assumptions $\ref{as: consistency}$, \ref{as: sample ratio} and \ref{as: bias} are satisfied.
    To satisfy assumption \ref{as: consistency}, we need conditions so that $\|\hat{\psi}_{st} - \psi\| = o_p(1)$ for every cross-fit fold $(s,t)$. Let's expand this out as
\begin{align*}
        &\hat{\psi}_{st} - \psi\\
         = & (\hat{\gamma}(X) - \gamma(X))a(X,Y,U,V) + (\hat{\alpha}(U)-\alpha(U)) + (\alpha(X)\gamma(X) -  \hat{\alpha}(X) \hat{\gamma}(X))\\
        = & (\hat{\gamma}(X) - \gamma(X))a(X,Y,U,V) + (\hat{\alpha}(U)-\alpha(U)) + \gamma(X)(\hat{\alpha}(X)-\alpha(X))\\
        &~~~~~ + \hat{\alpha}(X)(\gamma(X)-\hat{\gamma}(X))\,.
\end{align*}

Note that as $|a(X,Y,U,V)| \leq 1$ and $|\hat{\alpha}(X)|\leq 1$ since $a$ takes on values $0,1$ and $\hat{\alpha}$ is a probability, requiring $\|\hat{\gamma}(X) - \gamma(X)\| = o_p(1)$ and $\|\hat{\alpha}(U) - \alpha(U)\| = o_p(1)$ takes care of the first, second and last terms. We just need to the term $\gamma(X)(\hat{\alpha}(X)-\alpha(X))$, which requires that $\|\gamma(X)\hat{\alpha}(X) - \gamma(X)\alpha(X)\| = o_p(1)$.


Finally, we need to analyze the remainder term from Assumption \ref{as: bias}. It is enough to show that for each fold $(s,t)$, $C_{st} = o_p\left(\frac{1}{\sqrt{m+n}}\right)$. We can expand
\begin{align*}
        C_{st} &= \U (\hat{\psi}_{st} - \psi) \\
        &= E_{F,G} \left[\hat{\gamma}(X)a(X,Y,U,V) + \hat{\alpha}(U) - \hat{\gamma}(X)\hat{\alpha}(X) - \gamma(X)a(X,Y,U,V)\right.\\
        &~~~~~~~~~~~~~\left.-\alpha(U)+\gamma(X)\alpha(X)\right]\\
        &= E_X \left[\hat{\gamma}(X)\alpha(X) + \gamma(X) \hat{\alpha}(X) - \hat{\gamma}(X)\hat{\alpha}(X) - \gamma(X)\alpha(X) \right]\\
        &= E_X\left[(\gamma(X) - \hat{\gamma}(X))(\hat{\alpha}(X) -\alpha(X))\right].
\end{align*}

In particular, we can bound $|C_{st}|$ by the product of errors $\|\hat{\gamma} - \gamma\|\cdot\|\hat{\alpha} - \alpha\|$. This gives the final condition.

This condition satisfies the double robust property that just the product of the nuisance rates has to reach the $\sqrt{m+n}$ rate. This is ideal as it allows the use of slower converging nonparametric estimators for the nuisance functions while still achieving normality for the parameter of interest at the $\sqrt{m+n}$ rate. 
\end{proof}

\end{document}